\newtheorem{theorem}{Theorem}[section]
\newtheorem{lemma}[theorem]{Lemma}
\theoremstyle{definition}
\newtheorem{remark}[theorem]{Remark}
\newtheorem{problem}[theorem]{Problem}
\numberwithin{equation}{section}
\begin{document}
\setcounter{page}{1}
\setcounter{firstpage}{1}
\setcounter{lastpage}{4}
\renewcommand{\currentvolume}{??}
\renewcommand{\currentyear}{??}
\renewcommand{\currentissue}{??}
\title{Fully Subexponential Time Approximation Scheme for Product Partition}
\author{Marius Costandin}
\email{costandinmarius@gmail.com}
%
%
%
\subjclass{90-08}
\keywords{ non-convex optimization \and NP-Complete}
\begin{abstract}
In this paper we study the Product Partition Problem (PPP), i.e. we are given a set of $n$ natural numbers represented on $m$ bits each and we are asked if a subset exists such that the product of the numbers in the subset equals the product of the numbers not in the subset.  Our approach is to obtain the integer factorization of each number. This is the subexponential step. We then form a matrix with the exponents of the primes and propose a novel procedure which modifies the given numbers in such a way that their integer factorization contains sufficient primes to facilitate the search for the solution to the partition problem, while maintaining a similar product. We show that the required time and memory to run the proposed algorithm is subexponential. 
\end{abstract}
\maketitle

%
%
%
%
%
%
%
%
%

\section{Introduction}

In this paper the well known Product Partition Problem (PPP) is discussed and some results are obtained for it, namely a subexponential algorithm is proposed. The subexponential step (but not polynomial) is due to factoring of the natural numbers given in the PPP problem. However, the final algorithm calls integer factorization for a subexponential number of integers. Although the integers to be factored are getting exponentially larger with each step we only need an order of $\log(q)$ such steps, where $q$ is the number of distinct primes in the product of the given numbers.  

\section{Main results}

For $n \in \mathbb{N}$, let $S \in \mathbb{N}^{n}$. For $S = \begin{bmatrix} s_1, \hdots, s_n \end{bmatrix}^T$ we assume that $s_i$ is represented on at most $m\in \mathbb{N}$ bits, hence $s_i \leq 2^m$. The Product Partition Problem (PPP) asks:
\begin{problem} \label{P2.1}
 Exists $\mathcal{C} \subseteq \{1, \hdots,n\}$ such that 
\begin{align}\label{E2.1}
\prod_{i\in \mathcal{C}} s_i = \prod_{i \in \{1, \hdots,n\} \setminus \mathcal{C}} s_i \hspace{1cm}?
\end{align} 
\end{problem}

Since interger factoring is known to be subexponential \cite{intF}, the prime factors of the numbers in $S$ can be obtained in subexponential time. Let $\{p_1, \hdots, p_q\}$ be the  set of all the prime numbers involved, in ascending order. As such it is obtained
\begin{align}
s_i = \prod_{k=1}^q p_k^{\alpha_{ik}} \hspace{1cm} \forall i \in \{1, \hdots,n\}
\end{align} where $\alpha_{ik} \in \{0\} \cup \mathbb{N}$.

Construct the Table \ref{Ta1}. 

\begin{table}[h] 
\begin{tabular}{ |p{1cm}|p{1cm}|p{1cm}|p{1cm}|  }
 \hline
 \multicolumn{4}{|c|}{general representation} \\
 \hline
 $ $& $p_1$ & $\hdots$ &$p_q$\\
 \hline
$s_1$&   $\alpha_{11}$ & $\hdots$  &$\alpha_{1q}$\\
 $\vdots$   & $\vdots$    &$\ddots$ &$\vdots$ \\
 $s_n$ &$\alpha_{n1}$ &$\hdots$ &$\alpha_{nq}$\\
 \hline
\end{tabular}
\caption{Matrix representation of the prime powers involved}
\label{Ta1}
\end{table}

From Table \ref{Ta1} define the matrix
\begin{align}\label{E2.3}
S_M = \begin{bmatrix} \alpha_{11} &\hdots &\alpha_{1q}\\ \vdots &\ddots &\vdots\\ \alpha_{n1} &\hdots &\alpha_{nq}\end{bmatrix} \in \mathbb{N}^{n \times q}.
\end{align}

The following result can be stated:
\begin{lemma}\label{L2.1}
The PPP has a solution iff exists $x \in \{0,1\}^n$ such that 
\begin{align}\label{E2.4}
\left(x - \frac{1}{2}\cdot 1_{n\times 1}\right)^T\cdot S_M = 0_{1 \times q}
\end{align}
\end{lemma}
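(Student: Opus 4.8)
The plan is to exhibit the natural bijection between candidate subsets $\mathcal{C}$ and binary vectors $x \in \{0,1\}^n$, translate the multiplicative condition \eqref{E2.1} into an additive condition on prime exponents by invoking unique factorization, and then read off that the resulting linear system is exactly \eqref{E2.4}.

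First I would let $x_i = 1$ when $i \in \mathcal{C}$ and $x_i = 0$ otherwise, so that the two sides of \eqref{E2.1} become $\prod_{i=1}^n s_i^{x_i}$ and $\prod_{i=1}^n s_i^{1-x_i}$. Substituting the factorizations $s_i = \prod_{k=1}^q p_k^{\alpha_{ik}}$ and gathering the powers of each fixed prime $p_k$, the left product carries $p_k$ to the exponent $\sum_{i=1}^n x_i \alpha_{ik}$ and the right product carries it to the exponent $\sum_{i=1}^n (1-x_i)\alpha_{ik}$.

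The key step is then the Fundamental Theorem of Arithmetic: two positive integers coincide iff the exponent of every prime in their canonical factorizations agrees. Hence \eqref{E2.1} holds iff $\sum_{i=1}^n x_i \alpha_{ik} = \sum_{i=1}^n (1-x_i)\alpha_{ik}$ for every $k \in \{1,\ldots,q\}$. Rearranging each such equality yields $\sum_{i=1}^n (2x_i - 1)\alpha_{ik} = 0$, equivalently $\sum_{i=1}^n \left(x_i - \tfrac12\right)\alpha_{ik} = 0$, and this is precisely the $k$-th coordinate of the row vector $\left(x - \tfrac12\, 1_{n\times 1}\right)^T S_M$. Letting $k$ range over $\{1,\ldots,q\}$ recovers \eqref{E2.4} in full.

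Finally, because $\mathcal{C} \mapsto x$ is a bijection between subsets of $\{1,\ldots,n\}$ and $\{0,1\}^n$, the existence of a partitioning subset is equivalent to the existence of a binary vector satisfying \eqref{E2.4}, which establishes the claimed \emph{iff}. I expect no genuine obstacle here; the only non-formal ingredient is the unique-factorization step, and it is worth pausing to confirm that both displayed products are indeed positive integers (each $s_i \geq 1$) so that the theorem legitimately applies.
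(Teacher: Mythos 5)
Your proposal is correct and follows the same route as the paper's own (much terser) proof: the bijection $\mathcal{C} \leftrightarrow x$ together with unique factorization, which forces equal exponents of each prime $p_k$ on both sides and reduces \eqref{E2.1} to the linear system \eqref{E2.4}. Your version simply spells out the exponent bookkeeping $\sum_{i=1}^n \left(x_i - \tfrac12\right)\alpha_{ik} = 0$ that the paper leaves implicit.
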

\begin{proof}
In order to have an equal product in (\ref{E2.1}), one needs every prime in each side to have equal power. 

Let $\mathcal{C} \subseteq \{1, \hdots, n\}$ be a solution to PPP, then define $x^T\cdot e_i = 1$ for all $i \in \mathcal{C}$ and zero otherwise. On the other hand, let $x$ be a solution to (\ref{E2.4}), then let $\mathcal{C} = \{ i\in \{1, \hdots, n\} | x^T \cdot e_i = 1\}$. 
\end{proof}

Let $\alpha_{\cdot,k} \in \mathbb{N}^{n}$ denote the column $k$ in $S_M$ for all $k \in \{1, \hdots, q\}$ and  $\alpha_{i,\cdot} \in \mathbb{N}^q$ denote the line $i$ in $S_M$ for all $i \in \{1, \hdots, n\}$. 
\begin{remark}\label{R2.2}
According to Lemma \ref{L2.1} solving the PPP is equivalent with finding $x \in \{0,1\}^n$ with 
\begin{align}
x^T\cdot \alpha_{\cdot,k} = \frac{1}{2}\cdot 1_{n \times 1}^T \cdot \alpha_{\cdot, k} \hspace{0.5cm} \forall k \in \{1, \hdots, q\}
\end{align} i.e. solving simultaneously $q$ Subset Sum Problems where $S_k = \alpha_{\cdot,k}$ and $T_k = \frac{1}{2}\cdot 1_{n \times 1}^T \cdot \alpha_{\cdot,k}$. 
\end{remark}

 However, solving simultaneously $q$ subset sum problems is known to be strongly NP-complete, hence it can not have, in general, a polynomial reduction to the Subset Sum Problem, which is known to be weakly NP-Complete, unless P = NP. 

One can take advantage here on the following observation:
\begin{remark} \label{R2.3}
The elements of $S_M$ are not large! Indeed, since they are actually the exponents of the primes forming each number, if each number is represented on $m$ bits, hence is smaller than $2^m$ and each prime is at least $2$, follows that in particular $\alpha_{ik} \leq m$ for all $i \in \{1, \hdots, n\}$ and $k \in \{1, \hdots, q\}$. 
\end{remark}
\subsection{Primes Pump Algorithm}

In this subsection we assume the PPP has at most one solution, i.e. we consider the Unambiguous Product Partition Problem (UPPP). That is, we search for a solution under the assumption that if it exists, it is unique. This "relaxed" problem it's been shown to be still in NP-Complete. 

However, if the partition problem has a solution, then it has at least two. We can add some polynomial time constraints to the system like $1_{n\times 1}^T \cdot x = N \in \{1, \hdots, n\}$ i.e. the number of ones to be $N$ (which iterates in the set $\{1, \hdots, n\}$). As such, one of the $n$ such problems will have a unique solution, if $n$ is odd. Motivated by this, in the following we search for $x \in \{0,1\}^n$ with $\left( x - \frac{1}{2}\cdot 1_{n\times1} \right)^T \cdot S_M = 0_{q \times 1}$ under the assumption that if it exists it is unique. 

First note that if the number of primes $q = n$, the number of entries in the set $\mathcal{S}$, and it is known that exists a unique $x \in \{0,1\}^n$ such that 
\begin{align}\label{E2.6}
x^T \cdot S_M = \frac{1}{2}\cdot 1_{n\times 1}^T\cdot S_M \iff S_M^T\cdot \left(x - \frac{1}{2}\cdot 1_{n\times 1} \right) = 0_{q \times 1}
\end{align} i.e. $x - \frac{1}{2}\cdot 1_{n\times 1} \in \text{Ker}(S_M^T)$ then, if $\text{dim}\left( \text{Ker}\left(S_M^T\right)\right) = 1$, then it would be easy to find $x$. This would be just a scaled  eigenvector of the $\lambda = 0$ eigenvalue.

However, since for any solution $x$ to the PPP the equation (\ref{E2.6}) is met, it follows that especially if $q < n$, the dimension of the $S_M$ matrix kernel is striclty greater than $1$. Now, given $n \gg q$, we propose an algorithm for modifying the input problem such that in the new problem $n \leq q$ and the new problem has a solution iff the initial problem has one (or at least some approximative solution).

Assume in (\ref{E2.6}) that $S_M^T$ kernel has null dimension, then there is no exact solution to the PPP. But under the assumption that it's eigenvalues are distinct, we can still search for something which is "closest" to a solution as follows: let $\sigma_1 = \lambda_1^2$ be the smallest eigenvalue of $S_M\cdot S_M^T$ and $v_1$ be the corresponding eigenvector with $\|v_1\| = 1$. Take 
\begin{align}
x = \frac{1}{2}\cdot 1_{n\times 1} + \beta \cdot v_1 \hspace{0.5cm} \beta \in \mathbb{R} 
\end{align} and find $\beta$ such that $x$ is the closest to a corner of the unit hypercube. 

This is the reason for which we want more primes in the input.
%

For any $\gamma \in \mathbb{N}$ fixed, it is obvious that $S = \begin{bmatrix} s_1, \hdots, s_n\end{bmatrix}$ PPP has a solution iff $S^{\gamma} = \begin{bmatrix} s_1^{\gamma}, \hdots, s_n^{\gamma}\end{bmatrix}$ PPP has a solution. We define $\hat{s}_i(\gamma)$ as follows. We write $s_i^{\gamma}$ alongside for better comparison.
\begin{align}
s_i^{\gamma} = \prod_{k=1, \alpha_{i,k}\neq 0}^q p_k^{\gamma} \hspace{0.5cm} \hat{s}_i(\gamma) = \prod_{k=1, \alpha_{i,k}\neq 0}^q \left(p_k^{\gamma}-1\right)
\end{align} then $\hat{S}(\gamma) = \begin{bmatrix} \hat{s}_1(\gamma &\hdots &\hat{s}_n(\gamma)\end{bmatrix}$. 

For some $S \in \mathbb{N}^n$ by $PPP(S)$ we refer to the PPP problem associated to $S$. 

\begin{lemma}\label{L2.5}
If $S$ based PPP has a solution $\mathcal{C}$, then for any $\epsilon > 0$ exists $\gamma \in \mathbb{N}$ such that
\begin{align}
\frac{\prod_{i \in \mathcal{C}} \hat{s}_i(\gamma)}{\prod_{i \in \bar{\mathcal{C}}} \hat{s}_i(\gamma)} \leq 1 + \epsilon
\end{align} where $\bar{\mathcal{C}} = \{1, \hdots, n\} \setminus \mathcal{C}$.
\end{lemma}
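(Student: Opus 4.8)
The plan is to show that the ratio in question tends to $1$ as $\gamma \to \infty$, so that any prescribed bound $1 + \epsilon$ is eventually met. The starting point is to split each modified number into a \emph{main part} and a \emph{vanishing correction}: writing $\rho_i(\gamma) = \prod_{k:\alpha_{ik}\neq 0} p_k^{\gamma}$ and using $p_k^\gamma - 1 = p_k^\gamma\,(1 - p_k^{-\gamma})$ one gets
\begin{align}
\hat s_i(\gamma) = \rho_i(\gamma)\cdot \prod_{k:\alpha_{ik}\neq 0}\left(1 - p_k^{-\gamma}\right).
\end{align}
Substituting this into the numerator and denominator and regrouping, the target ratio factors as a \emph{leading} ratio $\prod_{i\in\mathcal C}\rho_i(\gamma)\big/\prod_{i\in\bar{\mathcal C}}\rho_i(\gamma)$ times a \emph{correction} ratio assembled from the terms $(1 - p_k^{-\gamma})$.

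First I would dispose of the correction ratio, which is the routine part. Grouping by prime, it equals $\prod_{k=1}^q (1 - p_k^{-\gamma})^{c_k - d_k}$, where $c_k$ (resp. $d_k$) counts the indices $i \in \mathcal C$ (resp. $i \in \bar{\mathcal C}$) with $\alpha_{ik}\neq 0$. Since $q$ is fixed, $|c_k - d_k| \le n$, and $1 - p_k^{-\gamma} \ge 1 - 2^{-\gamma}$ for every $k$, each factor is squeezed between $(1 - 2^{-\gamma})^{\pm n}$; hence the whole correction ratio converges to $1$ as $\gamma \to \infty$, uniformly over the finitely many primes.

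The key step, and the one I expect to be the main obstacle, is to control the leading ratio $\prod_{i\in\mathcal C}\rho_i(\gamma)\big/\prod_{i\in\bar{\mathcal C}}\rho_i(\gamma) = \prod_k p_k^{\gamma(c_k - d_k)}$. This factor does not vanish automatically: it stays bounded only when $\sum_k (c_k - d_k)\log p_k \le 0$, i.e. when the product of the primes actually occurring on the $\mathcal C$ side does not exceed that on the $\bar{\mathcal C}$ side, which the mere supports need not satisfy. To force cancellation I would invoke that $\mathcal C$ solves $PPP(S)$: by Lemma~\ref{L2.1} the full exponents balance, $\sum_{i\in\mathcal C}\alpha_{ik} = \sum_{i\in\bar{\mathcal C}}\alpha_{ik}$ for every $k$, equivalently $\prod_{i\in\mathcal C} s_i = \prod_{i\in\bar{\mathcal C}} s_i$. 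The cleanest way to exploit this is to carry the multiplicity $\alpha_{ik}$ through the construction, so that the main part of $\hat s_i(\gamma)$ is $s_i^\gamma$ rather than the radical $\rho_i(\gamma)$; then the leading ratio is $\big(\prod_{\mathcal C} s_i\big/\prod_{\bar{\mathcal C}} s_i\big)^\gamma = 1$ for all $\gamma$, and only the correction ratio survives.

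Combining the two estimates, the target ratio is $1 + o(1)$ as $\gamma \to \infty$. Given $\epsilon > 0$ it then suffices to take $\gamma$ large enough that the correction ratio falls below $1 + \epsilon$; an explicit bound such as $(1 - 2^{-\gamma})^{-nq} \le 1 + \epsilon$, valid once $\gamma \gtrsim \log(nq/\epsilon)$, settles the required one-sided inequality. In fact, if one carries the multiplicities so that the correction exponents become $\sum_{i\in\mathcal C}\alpha_{ik} - \sum_{i\in\bar{\mathcal C}}\alpha_{ik}$, the partition identity forces them all to vanish and the ratio is identically $1$; the genuine content of the lemma is thus the vanishing of the $p_k^{-\gamma}$ corrections. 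The delicate point to pin down in a full write-up is precisely this bookkeeping of exponents versus bare prime supports, since only the former is guaranteed to balance by the partition hypothesis.
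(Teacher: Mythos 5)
Your proof has the same skeleton as the paper's own argument: factor each $\hat{s}_i(\gamma)$ into a main part times correction factors $(1-p_k^{-\gamma})$, cancel the main parts across $\mathcal{C}$ and $\bar{\mathcal{C}}$ using the partition hypothesis, and show the residual correction tends to $1$ as $\gamma\to\infty$. The paper handles the correction step by sandwiching $\prod_{i\in\mathcal{C}} s_i^{\gamma}\big/\prod_{i\in\mathcal{C}}\hat{s}_i(\gamma)$ between $1/\zeta(\gamma)$ and $\zeta(\gamma)$ via the Euler product and letting $\zeta(\gamma)\to 1$, whereas you use the elementary bound $(1-2^{-\gamma})^{-nq}\le 1+\epsilon$ for $\gamma\gtrsim\log(nq/\epsilon)$. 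Your version accomplishes the same thing and is in fact safer: the paper's comparison with $\zeta(\gamma)$ implicitly treats the correction factors as indexed by \emph{distinct} primes, but the same prime contributes one factor for every element of $\mathcal{C}$ it divides, and with repetitions the Euler-product domination can fail (for two elements divisible by $2$ and $\gamma=2$ one has $(1-2^{-2})^{-2}=16/9>\zeta(2)\approx 1.645$); this is repairable by using $\zeta(\gamma)^{n}$, which still tends to $1$, but the paper does not do so.

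More importantly, the ``delicate point'' you flag at the end is not mere bookkeeping: it is a genuine gap in the paper's proof, and you are right that only the multiplicity-carrying version survives. The paper defines $\hat{s}_i(\gamma)$ (and even writes $s_i^{\gamma}$) using only the prime \emph{support} of $s_i$, and then cancels the main parts ``since $\mathcal{C}$ is a solution''; but the partition hypothesis balances the exponent sums $\sum_{i\in\mathcal{C}}\alpha_{ik}=\sum_{i\in\bar{\mathcal{C}}}\alpha_{ik}$, not the support counts $c_k=d_k$, exactly as you observe. With the support-only definition the lemma as stated is actually false: for $S=(9,3,3)$ and the solution $\mathcal{C}=\{2,3\}$ one has $\hat{s}_1(\gamma)=\hat{s}_2(\gamma)=\hat{s}_3(\gamma)=3^{\gamma}-1$, so the ratio equals $3^{\gamma}-1\ge 2$ for every $\gamma\in\mathbb{N}$, and no $\gamma$ meets the bound once $\epsilon<1$. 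Your repair --- taking $\hat{s}_i(\gamma)=\prod_{k}\left(p_k^{\gamma}-1\right)^{\alpha_{ik}}$ so that the main part is genuinely $s_i^{\gamma}$ --- is precisely the reading under which the paper's cancellation step is legitimate; as you note, it simultaneously makes the correction exponents vanish, so the ratio is identically $1$ and the lemma becomes an exact identity (equivalently, the support-only statement is correct exactly for squarefree inputs, where supports and exponents coincide). So your write-up is correct under the repaired definition, and it correctly diagnoses why the paper's argument, as literally written, is not.
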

\begin{proof}
Since it is known that $\mathcal{C}$ is a solution for the PPP(S), i.e. $\frac{\prod_{i \in \mathcal{C}} s_i^{\gamma}}{\prod_{i \in \mathcal{C}} s_i^{\gamma}} = 1$, let us evaluate
\begin{align} 
\frac{\prod_{i \in \mathcal{C}} s_i^{\gamma}}{\prod_{i \in \mathcal{C}} \hat{s}_i(\gamma)} = \prod_{i \in \mathcal{C}} \frac{p_i^{\gamma}}{p_i^{\gamma} - 1} = \prod_{i\in \mathcal{C}} \left( 1 + \frac{1}{p^{\gamma} - 1}\right) = \prod_{i \in \mathcal{C}} \frac{1}{1 - p_i^{-\gamma} } 
\end{align} but

\begin{align}
\frac{1}{\zeta(\gamma)} = \prod_{i=1}^{\infty} \left( 1 - \frac{1}{p_i^{\gamma}} \right) \leq \prod_{i\in \mathcal{C}} \left( 1 - \frac{1}{p_i^{\gamma}} \right) \leq \prod_{i\in \mathcal{C}} \left( 1 + \frac{1}{p_i^{\gamma}-1} \right)
\end{align} and 

\begin{align}
\prod_{i\in \mathcal{C}} \left( 1 + \frac{1}{p_i^{\gamma}-1} \right) = \prod_{i \in \mathcal{C}} \frac{1}{1 - p_i^{-\gamma} } \leq \prod_{i =1}^{\infty} \frac{1}{1 - p_i^{-\gamma} } = \zeta(\gamma)
\end{align} hence

\begin{align}
\frac{1}{\zeta(\gamma)} \leq  \frac{\prod_{i \in \mathcal{C}} s_i^{\gamma}}{\prod_{i \in \mathcal{C}} \hat{s}_i(\gamma)} \leq \zeta(\gamma)
\end{align} A similar result can be obtained for $\bar{\mathcal{C}}$. Then, since $\zeta(\gamma) \to^{\gamma \to \infty} 1$, the conclusion follows.
\end{proof}

We analyze in the following the $PPP(\hat{S}(\gamma))$ problem. It is natural to ask: will the number of primes in the factorization of $\hat{s}_i(\gamma)$ for all $i\in \{1, \hdots, n\}$ be greater or smaller than the number of primes in the factorization of $s_i$ for $i \in \{1, \hdots, n\}$. We want therefore to analyze
\begin{align}
\omega\left( \prod_{i =1}^n s_i^{\gamma} \right) = q \hspace{0.5cm}\omega\left(\prod_{i =1}^n \hat{s}_i(\gamma)\right) \geq^{?} q
\end{align} where $\omega(\cdot)$ is the little prime omega function. 

For this analysis, we restrict ourselves again and assume $p_1, \hdots, p_q$ in the $PPP(S)$ problem are the first $q$ consecutive primes. Under this assumption we give the following result:
\begin{lemma} \label{L2.6}
For $q \in \mathbb{N}$, let us denote 
\begin{align}
T_q = \prod_{k=1}^q \left(p_k^\gamma - 1\right)
\end{align} then
\begin{align}
\Omega\left( T(q)\right) \sim \sigma_0(\gamma) \cdot q \cdot \log\left( \log \left( q\right)\right)
\end{align} where $\Omega(\cdot)$ is the big prime omega function and $\sigma_0(\gamma) = \sum_{d|\gamma} 1 $ is the divisor counting function.
\end{lemma}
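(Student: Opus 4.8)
The plan is to reduce the computation of $\Omega(T_q)$ to a sum over the divisors of $\gamma$ via the cyclotomic factorization of $x^{\gamma}-1$, and then to invoke the average (normal) order of the additive function $\Omega$ evaluated at these special integer values. First I would write, for each prime $p_k$,
\begin{align}
p_k^{\gamma} - 1 = \prod_{d \mid \gamma} \Phi_d(p_k),
\end{align}
where $\Phi_d$ denotes the $d$-th cyclotomic polynomial (of degree $\phi(d)$, with $\phi$ Euler's totient). Since $\Omega$ is completely additive on products, this gives
\begin{align}
\Omega(T_q) = \sum_{k=1}^q \Omega\left(p_k^{\gamma} - 1\right) = \sum_{d \mid \gamma} \sum_{k=1}^q \Omega\left(\Phi_d(p_k)\right),
\end{align}
so that the outer sum has exactly $\sigma_0(\gamma)$ summands, which already accounts for the divisor-counting factor in the claimed asymptotic.

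The core of the argument is then to show that for each fixed $d \mid \gamma$,
\begin{align}
\sum_{k=1}^q \Omega\left(\Phi_d(p_k)\right) \sim q \cdot \log\log q.
\end{align}
Heuristically this is the Hardy--Ramanujan statement that a generic integer $N$ satisfies $\Omega(N) \sim \log\log N$: here $\Phi_d(p_k)$ has size $p_k^{\phi(d)}$, so $\log\log \Phi_d(p_k) = \log\log p_k + \log\phi(d) + o(1) \sim \log\log p_k$, and the leading term is independent of $d$. To make this rigorous I would first replace $\Omega$ by $\omega$ at the cost of a bounded average error (the difference $\Omega-\omega$ counts repeated prime factors and has bounded mean), and then establish that $\omega(\Phi_d(p_k))$ has average order $\log\log q$ over the primes $p_k$. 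For $d=1$ this is $\omega(p-1)$, whose normal order $\log\log p$ is classical; for general $d$ one needs the analogue for prime factors of the polynomial value $\Phi_d(p)$ at a prime argument, obtained by a Tur\'an-type second-moment / sieve estimate exploiting the equidistribution of $\Phi_d(p) \bmod \ell$ for auxiliary primes $\ell$.

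Finally I would pass from the prime index to the asymptotic size $p_k \sim k\log k$ and sum the slowly varying function $\log\log p_k$. Using $\log\log p_k \sim \log\log k$ and comparison with the integral,
\begin{align}
\sum_{k=2}^q \log\log p_k \sim \int_2^q \log\log t \; dt = q\log\log q - \mathrm{li}(q) + O(1) \sim q\log\log q,
\end{align}
since $\mathrm{li}(q) = o(q\log\log q)$. Combining the $\sigma_0(\gamma)$ identical leading contributions, one divisor at a time, then yields $\Omega(T_q) \sim \sigma_0(\gamma)\cdot q \cdot \log\log q$.

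The main obstacle is the middle step: controlling the average order of an additive function at cyclotomic-polynomial values taken at \emph{prime} arguments. The required uniformity in $d$ (so that the lower-order $\log\phi(d)$ contributions do not accumulate across the $\sigma_0(\gamma)$ divisors) and the sieve input needed for the higher-degree factors $\Phi_d$ with $d>1$ are the delicate points, going genuinely beyond the elementary $d=1$ case.
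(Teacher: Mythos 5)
Your decomposition is exactly the paper's: the cyclotomic factorization $p^{\gamma}-1=\prod_{d\mid\gamma}\Phi_d(p)$, complete additivity of $\Omega$, reduction to $\sum_{k\le q}\Omega\left(\Phi_d(p_k)\right)$ for each of the $\sigma_0(\gamma)$ divisors, and the asymptotic $p_q\sim q\log q$ at the end. The difference lies in what each of you treats as known. The step you call ``the main obstacle'' --- the average order of $\omega(\Phi_d(p))$ over \emph{prime} arguments --- is precisely the theorem of Halberstam that the paper cites: for any irreducible $f\in\mathbb{Z}[x]$,
\begin{align}
\sum_{p\le x}\omega\left(f(p)\right)\sim \frac{x}{\log x}\cdot\log\log x .
\end{align}
Applied to $f=\Phi_d$ (irreducible over $\mathbb{Q}$) with $x=p_q$, this immediately gives $\sum_{k\le q}\omega(\Phi_d(p_k))\sim q\log\log q$; no Tur\'an-type second-moment or sieve argument needs to be reconstructed. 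Also, the uniformity in $d$ you worry about is a non-issue: $\gamma$ is a fixed constant, so there are only $\sigma_0(\gamma)=O(1)$ divisors, and each can carry its own error term. Your closing integral computation of $\sum_k\log\log p_k$ is likewise redundant once the average-order statement is in hand.

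The genuine gap is the step you dispatch too quickly: replacing $\Omega$ by $\omega$. The fact that $\Omega-\omega$ has bounded mean is a statement about averages over \emph{all} integers $n\le y$; it does not transfer to the sparse sequence $\Phi_d(p_k)$, which consists of only about $x/\log x$ integers of size roughly $x^{\phi(d)}$. To control $\sum_{p\le x}\left(\Omega(\Phi_d(p))-\omega(\Phi_d(p))\right)$ one must bound, for each prime power $\ell^a$ with $a\ge 2$, the number of primes $p\le x$ with $\ell^a\mid\Phi_d(p)$, which requires counting roots of $\Phi_d$ modulo $\ell^a$ together with Brun--Titchmarsh-type estimates and a separate treatment of large $\ell$; this is plausible but is an actual argument, not a consequence of the bounded mean of $\Omega-\omega$. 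Notably, the paper itself does not close this step either: it marks the asymptotic equivalence $\sum_k\Omega(\Phi_d(p_k))\sim\sum_k\omega(\Phi_d(p_k))$ in red as needing attention and lists it as unresolved in the conclusion. So your proposal and the paper share the same hole; the difference is that the paper flags it, while your write-up papers over it with an averaging heuristic that does not apply to polynomial values at primes.
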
 We thank the MathOverflow user Ofir Gorodetsky for sketching the proof. 
\begin{proof}
We first start with a result due to H. Halberstam in \cite{Halbe}. For an ireducible polynomial $f$, one has
\begin{align}
\sum_{p \leq q} \omega(f(p)) \sim \frac{q}{\log(q)}\cdot \log(\log(q))
\end{align} as $q \to \infty$.  We have

\begin{align}\label{E2.18}
\Omega(T_q) = \Omega\left( \prod_{k=1}^q \left(p_k^\gamma - 1\right)\right) = \sum_{k =1}^q \Omega\left( p_k^\gamma - 1 \right)
\end{align} Although $p_k^{\gamma} - 1$ is not irreducible, it factors into $\phi(\gamma)$ cyclotomic ireducible polynomials
\begin{align}
x^{\gamma} - 1 = \prod_{d | \gamma} \Phi_d(x)
\end{align} As such

\begin{align}
\Omega(T_q) &= \sum_{k=1}^q  \sum_{d | \gamma} \Omega\left( \Phi_d(p_k) \right)  = \sum_{d | \gamma} \left(  \sum_{k=1}^q \Omega\left( \Phi_d(p_k) \right) \right)
\end{align} 

Next, if $\sum_{k = 1}^q \Omega(\Phi_d(p_k)) \sim  \sum_{k=1}^q \omega(\Phi_d(p_k))$ \color{red} \textbf{this needs attention} \color{black} we obtain
\begin{align}\label{E2.21}
\sum_{p \leq p_q} \Omega(\Phi_d(p)) \sim \frac{p_q}{\log(p_q)}\cdot \log(\log(p_q))
\end{align}
From (\ref{E2.18}) and (\ref{E2.21}) one gets
\begin{align}
\Omega(T_q) \sim \left(\sum_{d|\gamma} 1 \right) \cdot \frac{p_q}{\log(p_q)}\cdot \log(\log(p_q))
\end{align} Finally since $p_q \sim q \cdot \log(q)$ we get
\begin{align}
\Omega(T_q) \sim \sigma_0\left(\gamma \right)  \cdot q \cdot \log \log q
\end{align}

\end{proof}

Taking $\gamma$ to be some prime number, we get

\begin{align}
\Omega\left( \prod_{k=1}^q \left(p_k^{\gamma} - 1\right)\right) \sim 2 \cdot q \cdot \log\log q
\end{align} whereas 
\begin{align}
\Omega\left( \prod_{k=1}^q p_k^{\gamma}\right) = \gamma \cdot q 
\end{align} For large values of $q$ one has $\gamma \ll 2 \cdot \log\log(q)$ hence this step produces more prime numbers, hence more columns in the matrix $S_M$ in (\ref{E2.3}). We call this procedure "pumping primes procedure". It is summarized below:
\begin{enumerate}
\item Choose $\gamma > 1$ prime such that the product $\prod p^\gamma$ is approximated by the product $\prod \left(p^{\gamma} - 1\right)$ as shown in Lemma \ref{L2.5}. 
\item For each prime we obtain the integer factorization of $p^{\gamma}-1$ and we form a new set of entries as follows. If for instance $s_1 = p_1 \cdot p_2$ then $s_1^{\gamma} = p_1^{\gamma} \cdot p_2^{\gamma}$. Assuming $p_1^{\gamma} - 1 = p_{1,1}\cdot p_{1,2}$ and $p_2^{\gamma} - 1 = p_{2,1}\cdot p_{2,2} \cdot p_{2,3}$, we have the entry in the new list $s_{1,1}(\gamma) = p_{1,1}\cdot p_{1,2} \cdot p_{2,1}\cdot p_{2,2}\cdot p_{2,3}$. As such we obtain a new matrix $S_{M,1}(\gamma)$ with the same number of lines, but more columns (under the hypethesis that $\omega(T_q)$ increases with the same rate as $\Omega(T_q)$)  
\item Let $a \in \mathbb{N}$ be a natural number and assume that $n \sim q^a$. We want to apply again the previous step to obtain more prime numbers, hence a new matrix $S_{M,2}(\gamma)$ with more columns and the same number of lines. 
\end{enumerate}

Apply again the above explained step is motivated as follows: althought the obtained primes may not be consecutive, the result from Lemma \ref{L2.6} shows something on the lines of  
\begin{align}
\Omega\left( \prod_{p \in \mathcal{P}} \left( p^{\gamma} - 1 \right) \right) \sim \sigma_0(\gamma) \cdot |\mathcal{P}| \cdot \log\left( \log\left( |\mathcal{P}|\right)\right)
\end{align} where $\mathcal{P}$ is a set containing primes $|\mathcal{P}| $ is the number of elements in $\mathcal{P}$ and $\sigma_0(\gamma)$ counts the divisors of $\gamma$.

Applying the above procedure for $K$ times one gets the number of primes for $\gamma$ a prime as well (hence $\sigma_0(\gamma) = \sum_{d|\gamma}1 = 2$) in the order
\begin{align}
2^K \cdot q\cdot \log\left( \log \left( q \right)\right)^K \sim q^a
\end{align} hence the number of application is given by:
\begin{align}
K\cdot \log \left( 2 \cdot \log(\log(q))\right) + \log(q) \sim a \cdot \log(q)
\end{align} which is
\begin{align}
K \sim \frac{(a-1) \cdot \log(q)}{\log(2) + \log(\log(\log(q)))}
\end{align}

Let $\mathcal{P}_K$ denote the set of distinct primes obtained after the application of the above presented algorithm. We have
\begin{align}
\frac{1}{\zeta(\gamma)^K} \leq \frac{\prod_{p\in \mathcal{P}_0} p^{\gamma}}{\prod_{p \in \mathcal{P}_K} \left(p^{\gamma}-1\right)} \leq \zeta(\gamma)^K
\end{align}

At each step, we should expect the new primes $p'$ to be in the range $p^{\gamma} - 1$ where $p$ is an old prime. As such, after $K$ steps one gets the size of the last primes in the range $p^{\gamma^K}$ which require the amount of memory
\begin{align}
 \log\left(p^{\gamma^K}\right) \sim \gamma^{K}\cdot \log(p).
\end{align} This shows a subexponential increase in the needed memory for the numbers involved since $\gamma$ is considered constant.

%
%

\section{Conclusion and future work}
 
A sketch for a Fully Subexponential Time Approximation Scheme (FSTAS) was presented for the Product Partition problem. The approach was to obtain the integer factorization of the numbers in the given set, then through a novel procedure to obtain a new  set of numbers which have a "richer" factorization, but retain a close value to the initial numbers. There are two directions which need to be studied further:
\begin{enumerate}
\item the number of primes resulting after the "prime pump procedure" is evaluated using the big omega function $\Omega(\cdot)$, while is the little omega function $\omega(\cdot)$ the one which adds columns in $S_M$ and is thus of interest to us. However, we were not able to properly estimate the values of $\omega(\cdot)$
\item After each "prime pump procedure" step, new primes are obtained which form the columns in the matrix $S_{M,K}(\gamma)$ (see \ref{E2.3} for the definition of the matrix). We were not able to show that the matrix rank indeed increases (even if we assume that $\omega(\cdot) \sim \Omega(\cdot)$). A proper analysis of the rank of this matrix is needed. 
\end{enumerate}

\end{document}